\newcommand{\ie}{\textit{i}.\textit{e}.,\ }
\newcommand{\eg}{\textit{e}.\textit{g}.\ }
\newtheorem{predefin}{{\bf Definition}}
\newenvironment{definition}{\begin{predefin}{\hspace{-0.5 em}}
               }{\end{predefin}}
 \newtheorem{prelem}{{\bf Lemma}}
\newtheorem{prealphthm}{{\bf Theorem}}
 \newtheorem{prethm}{{\bf Theorem}}
 \newenvironment{theorem}{\begin{prethm}{\hspace{-0.5
               em}}}{\end{prethm}}
\newtheorem{preproof}{{\bf Proof.}}
\newenvironment{proof}[1]{\begin{preproof}{\rm
               #1}\hfill{$\blacksquare$}}{\end{preproof}}
\title{Generation, Ranking and Unranking of Ordered Trees with Degree Bounds}
\author{Mahdi Amani
\institute{Dipartimento di Informatica\\ Universit$\grave{\text{a}}$ di Pisa, Pisa, Italy.}
\email{m\_amani@di.unipi.it}
\and
 Abbas Nowzari-Dalini
\institute{School of Mathematics, Statistics, and Computer Science\\
 Colleague of Science,  University of Tehran, Tehran, Iran.}
\email{\quad nowzari@ut.ac.ir}
}
\begin{document}
\maketitle

\begin{abstract}
We study the problem of generating, ranking and unranking of unlabeled ordered trees whose nodes have maximum degree of $\Delta$. This class of trees represents a generalization of chemical trees. A chemical tree is an unlabeled tree in which no node has degree greater than 4. By allowing up to $\Delta$ children for each node of chemical tree instead of 4, we will have a generalization of chemical trees. Here,  we introduce a new encoding over an alphabet of size 4 for representing unlabeled ordered trees with maximum degree of $\Delta$.  We use this encoding for generating these trees in A-order with constant average time and $O(n)$ worst case time. Due to the given encoding, with a precomputation of size and time $O(n^2)$ (assuming $\Delta$ is constant), both ranking and unranking algorithms are also designed taking $O(n)$ and $O(n \log n)$ time complexities.
\end{abstract}


\section{Introduction}
\noindent 
A {\em labeled tree} is a tree in which each node is given a unique label. 
A {\em rooted tree} is a tree in which one of the nodes is distinguished from the others as the {\em root}. An {\em ordered tree} or {\em plane tree} is a rooted tree for which an ordering is specified for the children of each node.  
Studying combinatorial properties of restricted graphs or graphs with configurations has many applications in various fields such as machine learning and chemoinformatics. Studying combinatorial properties of restricted trees and outerplanar graphs (\eg ordered trees with bounded degrees) can be used for many purposes including virtual exploration of chemical universe, 
reconstruction of molecular structures from their signatures, 
and the inference of structures of chemical compounds~\cite{Ari03,Fuj08,Gut86,Han94,Shi11,Gut87,Zhu10}. 
In this paper we study the generation of unlabeled ordered trees whose nodes have  maximum degree of $\Delta$ and for the sake of simplicity, we denote it by {\em $T^\Delta$ tree}, also we use $T_n^\Delta$ to denote the class of $T^\Delta$ trees with $n$ nodes.  

Chemical trees are the most similar trees to $T^\Delta$ trees. 
Chemical trees are the graph representations of {\em alkanes}, or more precisely, the carbon atom skeleton of the molecules of alkanes~\cite{CapGut99,DobGut99,Gut95,Gut86,Gut98,Gut87}.
The alkane molecular family is partitioned into {\em classes of homologous molecules},
that is molecules with the same numbers of carbonium and hydrogen atoms; the
$n^{th}$ class of alkane molecular family is characterized by the formula $C_nH_{2n+2}$, $n = 1, 2, ...$~\cite{Ari03} with the same numbers of carbonium
and hydrogen atoms.
They are usually represented by indicating the carbonium atoms
and their  links, omitting to represent hydrogen atoms~\cite{Ari03}; therefore, all the nodes would have the same label; carbon (\ie  the tree is unlabeled), as shown in Figure~\ref{alkanepic} for $n=3$ and $n=4$.
  A {\em chemical tree} is defined as a tree in which no node has degree greater than 4~\cite{Gut87,Gut86,Gut95,DobGut99,CapGut99,Gut98}, chemical trees are also considered to be unlabeled~\cite{Gut95,DobGut99,CapGut99,Gut98}. Therefore, $T^\Delta$ tree can be considered as a generalization of chemical trees to   unlabeled ordered trees whose nodes have maximum degree of $\Delta$ instead of 4.

\begin{figure}
\begin{center}
\includegraphics[scale=0.25]{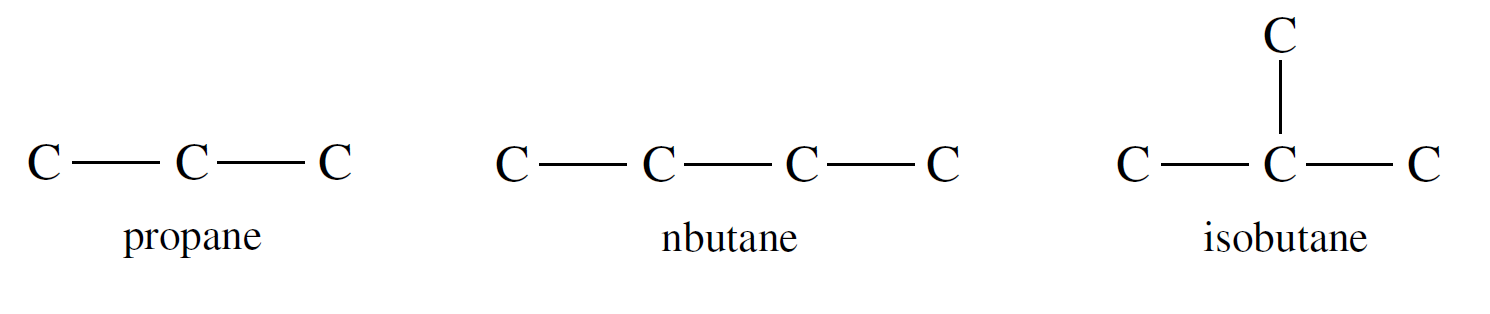}
\end{center}
\caption{Left: $C_3H_8$ propane, middle and right: $C_4H_{10}$ butanes.}
\label{alkanepic}
\end{figure}

 Generation, ranking and unranking of trees are very basic problems in computer science and discrete mathematics~\cite{za80}.   
 In general, for any combinatorial object of size $n$, we can define a variety of orderings.  Since we study trees, let's consider an arbitrary class of trees of size $n$ ($n$ nodes) showed by $\mathbb{T}_n$ (\textit{e.g.} semi-chemical trees of size $n$), the elements  of this set ($\mathbb{T}_n$) can be listed based on any defined ordering. Two such orderings for the trees are {\em A-order} and {\em B-order}, which will be explained in the next section.
By having $\mathbb{T}_n$ and an ordering (\eg A-order or B-order), {\em next}, {\em rank}, and {\em unrank} functions are defined as follows.

For a given tree $T\in \mathbb{T}_n$, the {\em next function} gives the successor tree of $T$ with respect to the defined ordering, the `position' of tree $T$ in $\mathbb{T}_n$ is called {\em rank}, the {\em rank function} determines the rank of $T$; the inverse operation of ranking is {\em unranking}, for a position $r$, the {\em unrank function} gives the tree $T$ corresponding to this position. A {\em generation algorithm}, generates all the elements in $\mathbb{T}_n$ with respect to the given ordering (starting from the first tree, then repeating the next function until producing the last tree)~\cite{za80}.

In most of the tree generation algorithms, a tree is represented by an integer or an alphabet sequence called {\em codeword}, hence all possible sequences of this representation are generated. 
This operation is called {\em tree encoding}. 
 Basically, the uniqueness of the encoding, the length of the encoding, and the capability of constructing the tree from its representation, which is called {\em decoding}, are essential considerations in the design of the tree encoding schema~\cite{za80}. 
 
  Many papers have been published earlier in the literature  for generating different classes of trees.  For example we can mention the generation of binary trees 
 in~\cite{para85,luroru93,vapa94,AhNo98,ahno99,ahno05,wuchawan06},
 $k$-ary trees    in~\cite{ru78,er92,vapa97,korlaf99,xiaushtan01,kor05,HeuLiMan08,AdlNowAhr11,WuChaCha11}, 
rooted trees in~\cite{Bey80,Wil89}, 
  trees with $n$ nodes and $m$ leaves in~\cite{pa87,TabAhrNow10},
neuronal trees in~\cite{Pal90},  
  and AVL trees in~\cite{li86}. 
  On the other hand,  many papers have thoroughly investigated basic combinatorial features of chemical  trees~\cite{Gut87,Gut86,Gut95,DobGut99,CapGut99,Gut98,Will98}.  

    More related to our work, in~\cite{Han94} a coding for chemical trees without the generation algorithm, and in~\cite{Ari03} the enumeration of chemical trees and in~\cite{Fuj08,Shi11} the enumeration of tree-like chemical graphs have been presented.
Hendrickson and Parks in~\cite{Hen91} investigated the enumeration and the generation of carbon skeletons which can have cycles and are not necessarily trees. 
The work most related to our paper is an algorithm for the generation of certain classes of trees such as chemical trees in~\cite{Bal92} with no ranking or unranking algorithm.   
  In that paper, all chemical trees with $n$ nodes are generated from the complete set of chemical trees with $n-1$ nodes, the redundant generations are possible and they needed to minimize the possible redundancy. 
  
  The problem of enumeration of ordered trees were also studied in~\cite{Yam09} and the generation of different ordered trees (with no bounds on the degrees of the nodes) were studied in~\cite{za80}. In~\cite{Zhu10}, a generation algorithm with constant average delay time but with no ranking or unranking algorithms was given for all unrooted trees of $n$ nodes and a diameter at least $d$ such that the degree of each node with distance $k$ from the center of the tree is bounded by a given function. 
  In~\cite{Wri09} all unrooted unlabeled trees have been generated in constant average time with no ranking or unranking algorithms.  Nakano and Uno in~\cite{Nak04} gave an algorithm to generate all rooted unordered trees with exactly $n$ nodes and diameter $d$ in constant delay time.  Up to now, to our knowledge,  no efficient generation, ranking or unranking algorithms are known for either `chemical trees' or `ordered trees with bounded degrees'.


The remaining of the paper is organized as follows. Section~\ref{Defsection} introduces the definitions and notions that are used further. Our new encoding for 
$T_n^\Delta$ trees is presented in Section~\ref{Encodingsection}. 
 The size of our encoding is $n$ while the alphabet size is always 4. Based on the presented encoding, a new generation algorithm with constant average time  and $O(n)$ worst case time is given Section~\ref{nextchap}. In this algorithm, $T_n^\Delta$ trees are generated in A-order. Ranking and unranking algorithms are also designed in Section~\ref{Ranksection} with $O(n)$ and $O(n \log n)$
 time complexities, respectively. The presented ranking and unranking algorithms need a precomputation of size and time $O(n^2)$ (assuming $\Delta$ is constant).

\section{Definition}
\label{Defsection}
\noindent

As mentioned before, the $n^{th}$ class of alkane molecular family is characterized by the formula $C_nH_{2n+2}$, $n = 1, 2, ...$~\cite{Ari03} with the same numbers of carbonium and hydrogen atoms. They are usually represented by indicating the carbonium atoms and their links, omitting to represent hydrogen atoms. Therefore, a chemical tree is a tree in which no node has degree greater than 4~\cite{Gut87,CapGut99,Gut98}. 
We study the class of ordered trees whose nodes have maximum degree of $\Delta$ and for the sake of simplicity, we denote it by {\em $T^\Delta$}. $T^\Delta$
can be considered as a generalized version of chemical trees. 

 Formally, a $T^\Delta$ tree $T$ is defined as a finite set of nodes such that $T$ has a root $r$, and if $T$ has more than one node, $r$ is connected to $j \leq \Delta$  subtrees $T_1, T_2, \ldots , T_j$ which each one of them is also recursively a $T^\Delta$ tree and by $T_n^\Delta$ we represent the class of $T^\Delta$ trees with $n$ nodes. An example of a  $T^\Delta$ tree is shown in Figure~\ref{chemexample}.

\begin{figure}
\begin{center}
\includegraphics[scale=0.9]{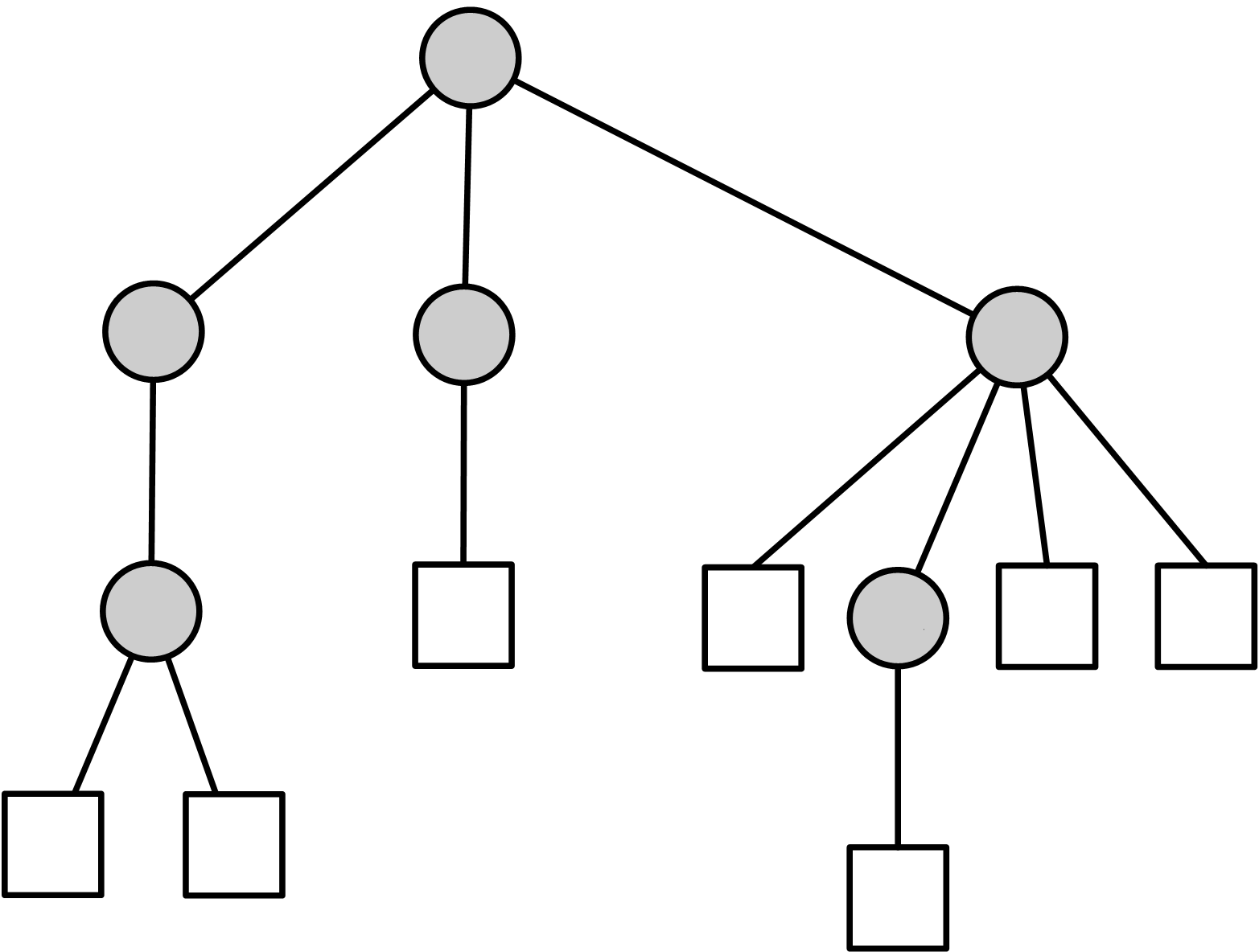}
\end{center}
\caption{A $T^\Delta$ tree with 13 nodes.}
\label{chemexample}
\end{figure}

As mentioned earlier, in  most of the tree  generation algorithms, a 
 tree is represented by an integer or an alphabet sequence called {\em codeword},  hence, all possible sequences of this representation are generated.
   In general, on any class of trees, we can define a variety of ordering
 for the set of trees.
Classical orderings on trees are
 {\em A-order} and {\em B-order} which are defined as follows~\cite{vapa94,za80}.

\begin{definition}
Let $T$ and $T'$ be two trees in $T^\Delta$ and $k= max
\{deg(T),deg(T')\}$, we say that $T$ is less than $T'$ in A-order
$(T \prec_{A} T')$, iff
\begin{itemize}

    \item $|T| < |T'|$, or

    \item $|T| = |T'|$ and for some $1 \leq i \leq k,$
     $T_j =_A T'_j$ for all $j = 1, 2, \ldots, i-1$ and $T_i \prec_A T'_i$;
\end{itemize}
 where $|T|$  is the number of nodes in the tree $T$ and $deg(T)$ is the degree of the root of $T$.
\end{definition}

\begin{definition}
Let $T$ and $T'$ be two trees in $T^\Delta$ and $k= max
\{deg(T),deg(T')\}$, we say that $T$ is less than $T'$ in B-order
$(T \prec_{B} T')$, iff
\begin{itemize}

    \item $deg(T) < deg(T')$, or

    \item $deg(T) = deg(T')$ and for some $1 \leq i \leq k,$
     $T_j =_B T'_j$ for all $j = 1, 2, \ldots, i-1$ and $T_i \prec_B T'_i$.
\end{itemize}
\end{definition}

 Our generation algorithm, which is given in the Section~\ref{nextchap},
 produces the sequences corresponding to $T_n^\Delta$ trees in
 A-order. 
 For a given tree $T\in T_n^\Delta$, the {\em generation algorithm} generates all the  successor trees of $T$ in $T_n^\Delta$; the position of tree $T$ in $T_n^\Delta$ is called {\em rank}, the {\em rank function} determines the rank of $T$; the inverse operation of ranking is {\em unranking}. These functions can be also employed in any random generation of $T_n^\Delta$ trees, for example.

\section{The encoding schema}
\label{Encodingsection}
\noindent 
The main point in generating trees is to choose a suitable encoding
 to represent them, and generate their corresponding codewords.
  Regarding the properties of $T_n^\Delta$, we present our new encoding. For any tree $T\in T_n^\Delta$, the encoding over 4 letters $\{s, \ell, m, r\}$ is defined as follows. The root of $T$ is labeled by $s$, and for any internal node, if it has only one child, that child is labeled by $s$, otherwise the leftmost child is labeled by $\ell$, and the rightmost child is labeled by $r$, and the children between the leftmost and the rightmost children (if exist) are all labeled by $m$. Nodes are labeled in the same way for any internal node in each level recursively, and by a pre-order traversal of $T$, the codeword will be obtained (one can say the labels $\ell$ and $r$ operate as left and right parenthesis to define the fingerprint of a subtree in the codewords, and $s$ is used when the left and right bounds are the same). This labeling is illustrated in Figure~\ref{chemcoded}.
Using this encoding, the 4-letters alphabet codeword corresponding to the first and last $T_n^\Delta$ trees in A-order are respectively ``$s\ell m^{\Delta-2}r\ell m^{\Delta-2}r \ldots \ell m^{(n\ mod\ \Delta)-2}r$'' and ``$s^{n}$'' which are shown in the Figure~\ref{chemfirstlast}-a and Figure~\ref{chemfirstlast}-b. Now, we prove the validity of this encoding for $T_n^\Delta$ trees (one-to-one correspondence). 

\begin{figure}
\begin{center}
\includegraphics[scale=0.9]{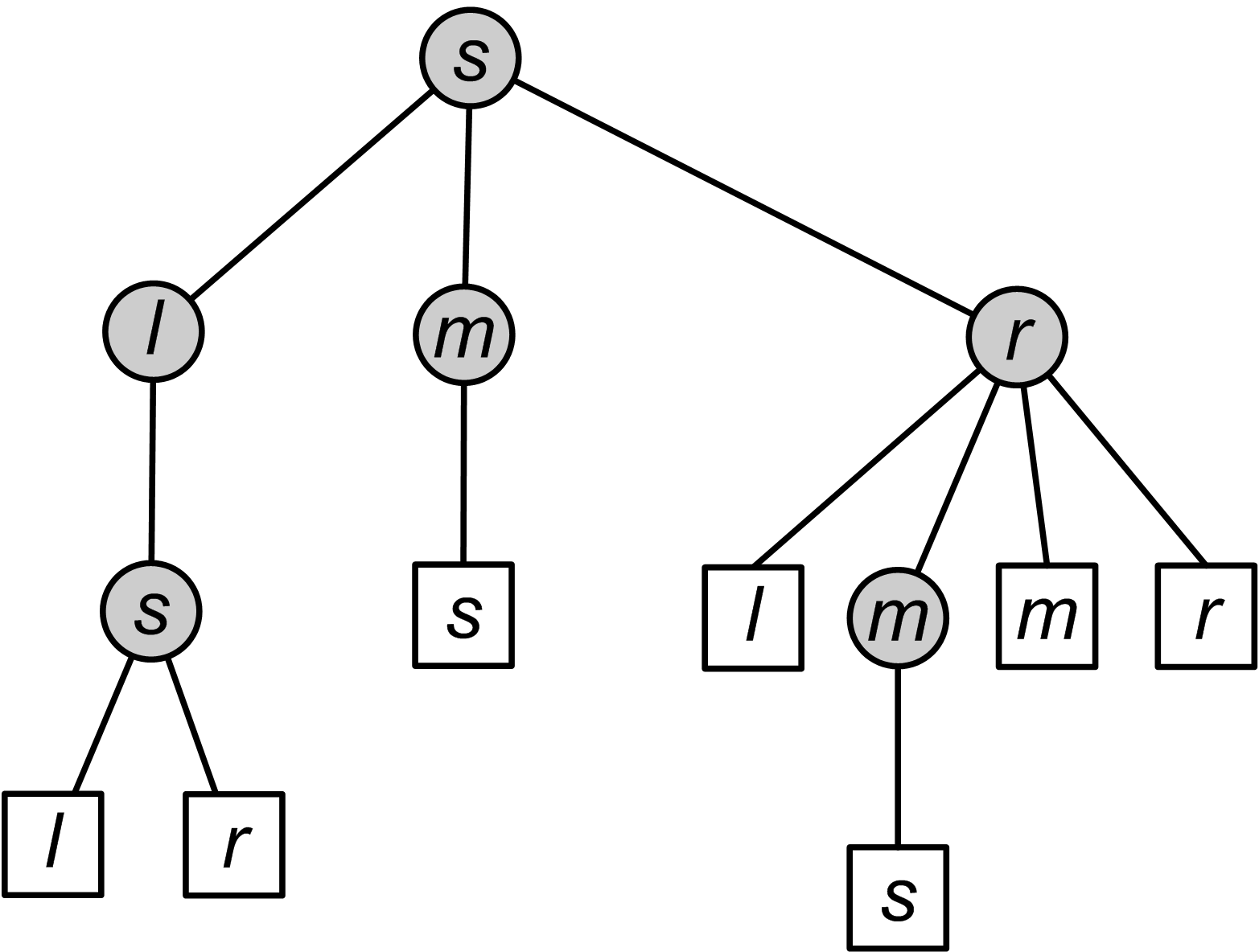}
\end{center}
\caption{An example of a tree $T\in T_n^\Delta$ (for $\Delta \geq 4$). Its codeword is ``$s\ell s\ell rmsr\ell msmr$''.}
\label{chemcoded}
\end{figure}

\begin{figure}
\begin{center}
\includegraphics[scale=0.75]{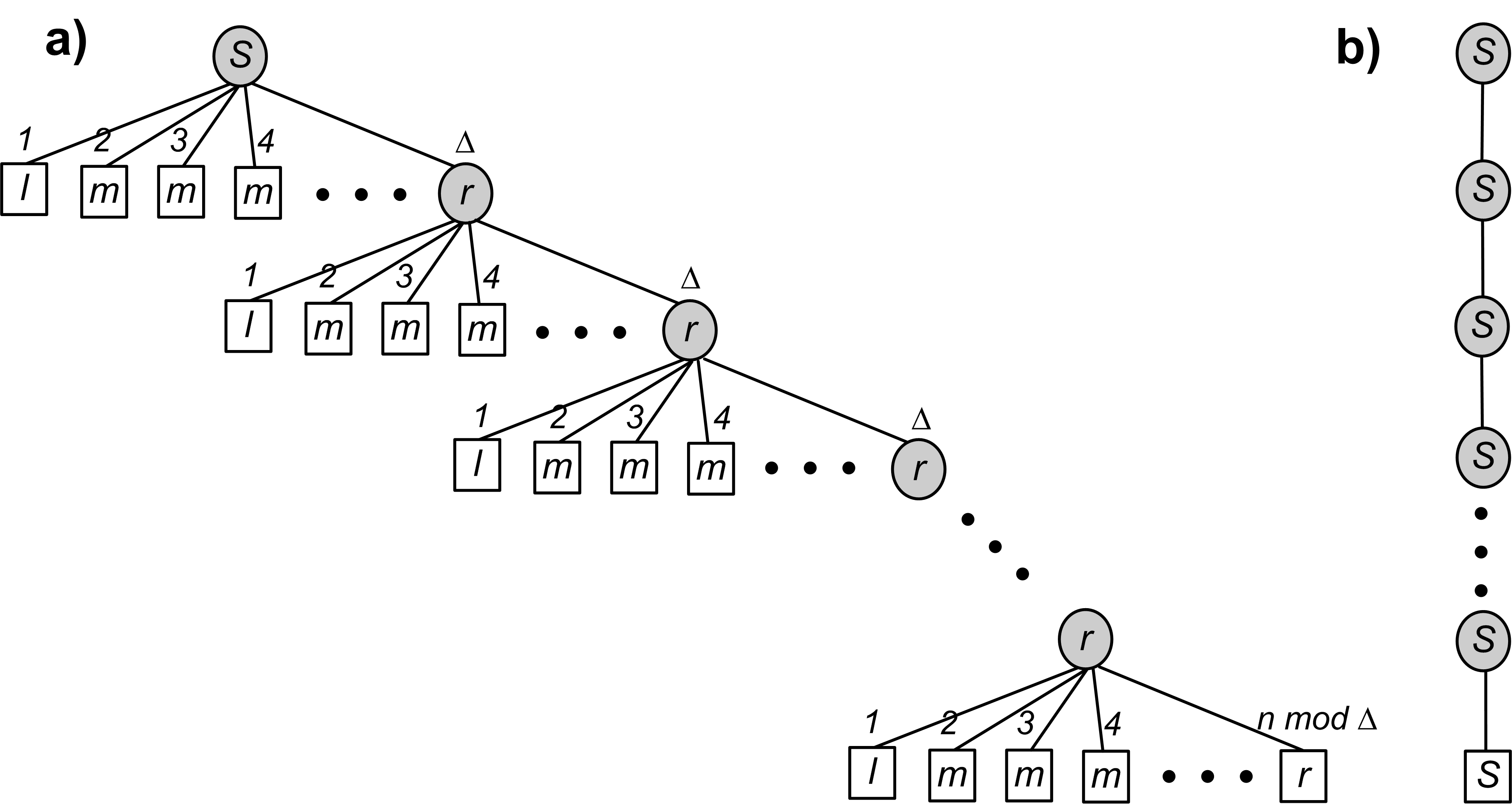}
\end{center}
\caption{ a) The first $T_n^\Delta$ tree in A-order. b) The last $T_n^\Delta$ tree in A-order.}
\label{chemfirstlast}
\end{figure}

\begin{definition}
Suppose that $\{s, \ell , m, r\}^*$ is the set of all sequences with alphabet of $s,  m, \ell ,  r$ and let $A$ be a proper subset of $\{s,\ell , m, r\}^*$, then we call the set $A$ a {\em CodeSet$^\Delta$} iff $A$ satisfies the following properties:

\begin{enumerate}
    \item  $\epsilon \in  A$ \  $(\epsilon$ is a string of length $0)$,
    \item   $\forall x \in A: sx \in A$,
    \item   $\forall \ x_1, x_2, \ldots, x_i \in  A$, and $2\leq i \leq \Delta$: $\ell x_1 m x_2 m x_3 \ldots m x_{i-1} r x_i \in A$.
\end{enumerate}
\end{definition}
Now we show that a {\bf valid codeword} is obtained by the concatenation of the character $s$ and each element of CodeSet$^\Delta$.
\begin{theorem}
Let $A$ be the ``CodeSet$^\Delta$" and $\delta \in A$.  Let $C$ be a codeword obtained by the concatenation of character $s$ and $\delta$  (denoted by $s\delta$). There is a one-to-one correspondence between $C$ and a $T^\Delta$ tree.
\end{theorem}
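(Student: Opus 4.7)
The plan is to establish the bijection by strong induction on $|\delta|$, by exhibiting a decoding map $D$ that turns a codeword $s\delta$ with $\delta\in A$ into a $T^\Delta$ tree and an encoding map $E$ running in the opposite direction, and then verifying that $D$ and $E$ are mutually inverse. For $D$: if $\delta=\epsilon$ the output is the single-node tree; otherwise the defining rules of $A$ force $\delta$ to begin with $s$ or with $\ell$. In the case $\delta=sx$ (rule~2), the induction hypothesis supplies a unique tree $T'$ for the shorter codeword $sx$, which is attached as the only subtree of a fresh root. In the case $\delta=\ell x_1 m x_2\ldots m x_{i-1} r x_i$ (rule~3), the induction hypothesis supplies unique trees $T_1,\ldots,T_i$ for the shorter codewords $sx_1,\ldots,sx_i$, and these become the ordered children of a fresh root; the bound $i\leq\Delta$ preserves the degree restriction.

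For $E$: the pre-order traversal with the prescribed labeling is unambiguous on the tree side, and I verify by induction on the tree that the resulting codeword has the form $s\delta$ with $\delta\in A$. If the root has no children, the tail is $\epsilon$, matching the base case. If it has one child, that child is labeled $s$ and the standalone codeword of its subtree (which by induction has the form $sx$ with $x\in A$) supplies exactly the string needed by rule~2. If it has $i\geq 2$ children, the leftmost is labeled $\ell$, the rightmost $r$, and any internal ones $m$; stripping the root label of the $j$-th subtree leaves a string $x_j\in A$, so the concatenation matches rule~3 with $i\leq\Delta$. Mutual invertibility then drops out of the structural correspondence between the two recursions.

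The main obstacle is well-definedness of $D$: the decomposition in rule~3 must be unique. To handle this I will prove the auxiliary lemma: defining $\beta(w)=|w|_\ell-|w|_r$, every $x\in A$ satisfies $\beta(x)=0$ and $\beta(p)\geq 0$ for every prefix $p$ of $x$. This follows by a short induction on the three defining rules of $A$. Given the lemma, the top-level separators in $\delta=\ell x_1 m x_2\ldots m x_{i-1} r x_i$ are pinpointed by the running balance: the top-level $r$ is the first occurrence of $r$ at which the balance of the prefix just processed drops to $0$, and the top-level $m$'s are the $m$'s strictly before that $r$ at which the balance equals $1$. Any $m$ or $r$ sitting strictly inside an inner $x_j$ is nested in a deeper $\ell\cdots r$ block, so its balance is at least $2$, and the split points are thereby forced. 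Once uniqueness of the decomposition is secured, the inductive construction of $D$ is well-defined and the bijection with $E$ follows immediately.
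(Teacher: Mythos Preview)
Your proposal is correct and follows the same inductive skeleton as the paper's proof: both argue by induction on codeword length, split on whether the tail $\delta$ begins with $s$ (rule~2) or with $\ell$ (rule~3), and in the latter case invoke the induction hypothesis on each of the shorter codewords $sx_j$ to obtain the subtrees hanging off a fresh root.

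The paper, however, is considerably terser. It argues only the decoding direction explicitly, asserts uniqueness of the resulting tree without justification, and in particular never addresses why the factorisation $\delta=\ell x_1 m x_2\cdots m x_{i-1} r x_i$ with $x_j\in A$ is unique. Your balance lemma---that every $x\in A$ satisfies $\beta(x)=0$ and $\beta(p)\geq 0$ for every prefix $p$---is precisely the missing ingredient, and your identification of the top-level separators via the running $\ell/r$ balance is the standard and correct way to pin them down. Your explicit construction of the encoding map $E$ and the check that $D$ and $E$ are mutual inverses also go beyond what the paper writes out. In short: same route, but you have supplied the rigor the paper leaves implicit.
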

\begin{proof}{
It can be proved by induction on the length of $C$. Initially for a codeword of length equal to 1, the proof is trivial. Assume that any codeword obtained in the above manner with length less than $n$ encodes a unique $T^\Delta$ tree. For a given codeword with length $n$, because of that concatenation of $s$ and $\delta$, we have:
\begin{enumerate}
    \item $C = sx$, such that $x \in A$, or
    \item  $C = s\ell x_1 mx_2 \ldots mx_{j-1} rx_j$, such that $x_i \in A$, $\forall 1\leq i \leq j \leq \Delta$.
\end{enumerate}

For the first case by induction hypothesis, $x$ is a valid codeword of a $T^\Delta$ tree $T$; therefore, $sx$ is another codeword corresponding to another $T^\Delta$ tree by adding a new root to the top of $T$. This tree is unique and shown in Figure~\ref{chemivalid}-a.
For the second case, by induction hypothesis and that concatenation of $s$ and $\delta$, each $sx_i$ for $1\leq i\leq j$ is a valid codeword for a $T^\Delta$ tree; therefore, with replacement of `$s$ with $\ell$  in $sx_1$' and `$s$ with $m$ in $sx_i$ for $2\leq i\leq j-1$' and finally `$s$ with $r$ in $sx_j$', we can produce $\ell x_1, mx_2, \ldots mx_{j-1}, rx_j$ codewords. Now they all are subtrees of a $T^\Delta$ tree whose codeword is $C = s\ell x_1 mx_2 \ldots mx_{j-1} rx_j$ (add a new root and connect it to each one of them). This tree is unique and shown in Figure~\ref{chemivalid}-b.  
}\end{proof}

\begin{figure}[!t]
\begin{center}
\includegraphics[scale=0.73]{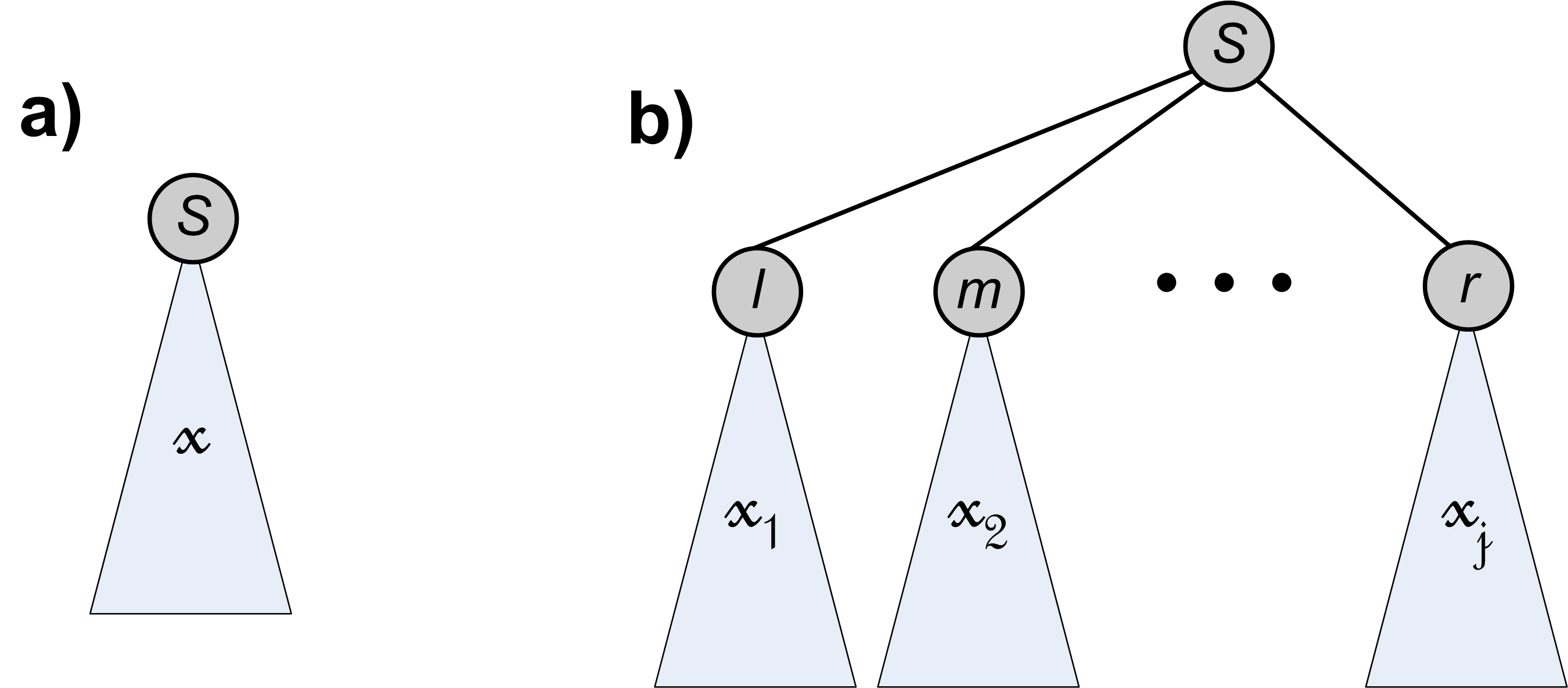}
\end{center}
\caption{$T^\Delta$ trees encoded by $C=sx$ and $C = s\ell x_1 mx_2 \ldots mx_{j-1} rx_j$.}
\label{chemivalid}
\end{figure}

For a $T_n^\Delta$ tree, this encoding needs only 4 alphabet letters and has length $n$. This encoding is simple and powerful, so it can be used for many other applications besides the generation algorithm. In the next section, we use it to generate $T^\Delta$ trees in A-order. 

\section{The generation algorithm}
\label{nextchap}

\noindent In this section, we present an algorithm that generates the successor
 sequence of a given codeword of a $T_n^\Delta$ tree in A-order.
For generating the successor of a given codeword $C$ corresponding to a $T_n^\Delta$ tree $T$, the codeword $C$ is scanned from right to left. Scanning the codeword $C$ from right to left, corresponds to a reverse pre-order traversal of $T$. 
First we describe how this algorithm works directly on  $T$, then we present the pseudocode of the algorithm. 
For generating the successor of a given $T_n^\Delta$ tree $T$ we traverse the tree in reverse pre-order as follows.
\begin{enumerate}
    \item  Let $v$ be the last node of  $T$ in pre-order traversal.
    \item   If $v$ doesn't have any brothers, then 
	\begin{itemize}
     		  \item  repeat $\{v =$ parent of $v.\}$
     		  
 		        until $v$ has at least one brother or $v$ be the root of tree $T$.
  		  \item   If $v =$ root, then the tree is the last tree in A-order and there is no successor.
	\end{itemize}
    \item   If $v$ has at least one brother (obviously it has to be a left brother), delete one node from the subtree of $v$ and insert this node into its left brother's subtree, then rebuild both subtrees (each one as a first tree with corresponding nodes in A-order).
\end{enumerate}

The pseudo code of this algorithm for codewords corresponding to $T_n^\Delta$ trees is presented in Figure~\ref{genchmical}. In this algorithm, the codeword is stored in a global array $C$ so it is used both as the input and the output codewords (this will be useful to generate all the codewords in constant average time); therefore, when the algorithms stops, $C$ is updated to the successor codeword.
In this algorithm, $n$ shows the size of the codeword (the number of nodes of the tree corresponded to $C$), $STsize$ is a variable contains the size of the subtree rooted by node corresponded to $C[i]$ and $SNum$ holds the number of consecutive visited $s$ characters. 
This algorithm also calls two functions $updateChildren( i, ChNum)$ presented in Figure~\ref{udchild}, and $updateBrothers( i, ChNum)$ presented in Figure~\ref{upBro}.

The procedure $updateChildren(i,ChNum)$ regenerates the codeword corresponding to the children of an updated node and the procedure $updateBrothers(i,ChNum)$  regenerates the codeword corresponding to the brothers of a node with regard to the maximum degree $\Delta$ for each node. In these algorithms, $C$ is the global array of characters storing the codeword, $i$ is the position of the current node in the array $C$.
In $updateChildren(i,ChNum)$, $ChNum$ is the number of children of $C[i]$ to regenerate the corresponding codeword and in $updateBrothers(i,ChNum)$,  $ChNum$ is the number of its brothers, instead. 
Each node has at most $\Delta$ children, so $NChild$ is a global array which  $NChild[i]$ stores the number of children of the parent of the current node (current node is the node corresponding to $C[i]$) so it wont exceed $\Delta$, \ie $NChild[i]$ holds the number of left brothers of the node corresponding to $C[i]$ including itself.

\begin{figure}[!t]
\begin{algorithmic}[0] 
\STATE \hspace*{0.3cm} {\bf Function} {\em AOrder-Next}($n$ :  {\bf integer});
\STATE \hspace*{0.3cm} {\bf var}  $i, Current, STsize, SNum$: {\bf integer}; 								$finished, RDeleted$: {\bf boolean};
\STATE \hspace*{0.3cm} {\bf begin} 
\STATE \hspace*{1cm}   	$Current := n$;    $STSize := 0$;    $RDeleted :=$ false;    						$finished :=$ false;
\STATE \hspace*{1cm}  		{\bf while} ( ($C[Current]=$ $'s'$) $\&$ ($Current \geq 1$) ) do 
\STATE \hspace*{1.7cm}  		  $STSize++$; $Current--$;  
\STATE \hspace*{1cm}    	{\bf if} ($Current=0$) {\bf then} {\bf return} (‘no 								successor’);
\STATE \hspace*{1cm}    	{\bf while} (not $finished$) {\bf do begin} 
\STATE \hspace*{1.7cm}		$STSize ++$;
\STATE \hspace*{1.7cm}		{\bf switch}  $C[Current]$ {\bf of}
\STATE \hspace*{2.4cm}			{\bf case}$'r'$: 
\STATE \hspace*{3.1cm}					$i:=Current-1$;  $SNum:=0$;
\STATE \hspace*{3.1cm}					{\bf while}  ($C[i]=$ $'s'$) {\bf do} \STATE \hspace*{3.8cm}						{ $SNum := SNum + 1;    i--;$}
\STATE \hspace*{3.1cm}					{\bf if}  ($C[i]=$ $'r'$)  {\bf then begin}
\STATE \hspace*{3.8cm}						{\em updateBrothers} ( $Current$ , $STSize$); \STATE \hspace*{3.8cm}                                   $Current:=i$;   $STSize:=SNum$; 
\STATE \hspace*{3.1cm}	                          {\bf end};
\STATE \hspace*{3.1cm}					{\bf if} ( ($C[i]=$  $'m'$) {\bf or} ($C[i]=$ $'\ell '$) ) {\bf then} {\bf begin}	 				
\STATE \hspace*{3.8cm}						{\bf if}  ($STSize=1$)  {\bf then}     RDeleted := true;
\STATE \hspace*{3.8cm}						{\bf if}  ($STSize>1$) {\bf then} {\bf begin}
\STATE \hspace*{4.5cm}		 					{$STSize--$;    $updateBrothers( Current+1, STSize);$ }
\STATE \hspace*{4.5cm}							$Current := i$;    $STSize := SNum+1;$
\STATE \hspace*{3.8cm}						{\bf end};
\STATE \hspace*{3.1cm}					{\bf end};
\STATE \hspace*{2.4cm}			{\bf case} $'m'$: 
\STATE \hspace*{3.1cm}				{\bf if}  ($RDeleted=true$)  {\bf then} {$C[Current] :=$ $'r';$}
\STATE \hspace*{3.1cm}			 	{\em updateChildren}( $Current+1, STSize -1$);    $finished$:= true;
\STATE \hspace*{2.4cm}			{\bf case} $'\ell '$:  
\STATE \hspace*{3.1cm}				{\bf if}   ($RDeleted=$ true)  {\bf then}  {$C[Current] :=$ $'s';$}
\STATE \hspace*{3.1cm}				 {\em updateChildren}( $Current+1, STSize -1$);   $finished$ := true;
\STATE \hspace*{1cm} 		{\bf end};
\STATE \hspace*{0.3cm}{\bf end};
\end{algorithmic}
 \begin{caption}{\label{genchmical}
  Algorithm for generating the successor codeword for $T_n^\Delta$ trees in A-order.}
\end{caption}
\end{figure}

\begin{figure}
\begin{algorithmic}[0] 
\STATE \hspace*{0.3cm}{\bf procedure} {\em updateChildren}( $i, ChNum$: {\bf integer});
\STATE \hspace*{0.3cm}{\bf begin}
\STATE \hspace*{1cm}		{\bf while} $(ChNum>0)$ {\bf do begin}
\STATE \hspace*{1.7cm}		{\bf if} $ChNum=1$ {\bf then begin}
\STATE \hspace*{2.4cm}			$C[i] :=$ $'s'$; $NChild[i]:=1$; $i++$; $ChNum--;$
\STATE \hspace*{1.7cm}		{\bf end};
\STATE \hspace*{1.7cm}		{\bf if} $ChNum>1$ {\bf then begin}
\STATE \hspace*{2.4cm}			$C[i] :=$ $'\ell'$; $NChild[i]:=1$; $i++$; $ChNum--$; 
\STATE \hspace*{2.4cm}			{\bf while} ( ($NChild[i]<(\Delta-1)$) $\&$ ($ChNum>1$) ) {\bf do begin} 
\STATE \hspace*{3.1cm}				 $C[i]:=$ $'m'$;  $NChild[i]:=NChild[i-1]+1$;   $i++$;  $ChNum--;$
\STATE \hspace*{2.4cm}			{\bf end};
\STATE \hspace*{2.4cm}			$C[i] :=$ $'r'$; $NChild[i]:=NChild[i-1]+1$; $i++$;    $ChNum--$;
\STATE \hspace*{1.7cm}		{\bf end}
\STATE \hspace*{1cm}		{\bf end};
\STATE \hspace*{0.3cm}{\bf end};
\end{algorithmic}
 \begin{caption}{\label{udchild}
  Algorithm for updating the children. }
\end{caption}
\end{figure}

\begin{figure}[!t]
\begin{algorithmic}[0] 
\STATE \hspace*{0.3cm}{\bf Procedure} {\em updateBrothers}( $i, ChNum$: {\bf integer});
\STATE \hspace*{0.3cm}{\bf begin}
\STATE \hspace*{1cm}		{\bf if} $ChNum=1$ {\bf then begin}
\STATE \hspace*{1.7cm}		$C[i]:=$ $'r'$;   $NChild[i]:=NChild[i-1]$;   $ChNum-- ;$
\STATE \hspace*{1cm}		{\bf end};
\STATE \hspace*{1cm}    	{\bf if} $ChNum>1$ {\bf then begin}
\STATE \hspace*{1.7cm}		$C[i]:=$ $'m'$;   $ChNum--$;   $i++;$
\STATE \hspace*{1.7cm}		{\bf while} ( ($NChild[i]<(\Delta-1)$ ) $\&$ ($ChNum>1$) ) {\bf do begin}
\STATE \hspace*{2.4cm}  			$C[i]:=$ $'m'$;   $NChild[i]:=NChild[i-1]+1$; $i++$;    $ChNum--$;
\STATE \hspace*{1.7cm}		{\bf end};
\STATE \hspace*{1.7cm}		$C[i] :=$ $'r'$;   $NChild[i]:=NChild[i-1]+1$;
\STATE \hspace*{1.7cm}		$ i++$;    $ChNum--$;	  $updateChildren(i, ChNum)$;
\STATE \hspace*{1cm}		{\bf end};
\STATE \hspace*{0.3cm}{\bf end};
\end{algorithmic}
 \begin{caption}{\label{upBro}
  Algorithm for updating the neighbors. }
\end{caption}
\end{figure}
 
\begin{theorem}
The algorithm Next presented in Figure~\ref{genchmical} has a worst case time complexity of $O(n)$ and an average time complexity of $O(1)$.
\label{nexttime}
\end{theorem}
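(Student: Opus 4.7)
The theorem comprises two claims---a worst-case bound of $O(n)$ and an amortized bound of $O(1)$ per generated tree---which I would handle separately.

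\textbf{Worst case.} I would prove the $O(n)$ bound by direct inspection of the three procedures in Figures~\ref{genchmical}, \ref{udchild} and~\ref{upBro}. In \emph{Next} the pointer $Current$ is initialized to $n$ and strictly decreases at each iteration of the outer and inner \textbf{while}-loops, so no position of $C$ is scanned more than once. The reconstruction helpers \emph{updateChildren} and \emph{updateBrothers} each walk forward from some index $i \geq Current$, writing a single character of $C$ per iteration, and \emph{updateBrothers} finishes with a single call to \emph{updateChildren} that resumes strictly to the right. Hence the total number of character accesses per \emph{Next} call is bounded by $n$, each access being constant work.

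\textbf{Amortized analysis.} This is the heart of the theorem. I would bound the cumulative cost $W_n$ of applying \emph{Next} successively to every tree of $T_n^\Delta$ (from the first to the last in A-order) and show $W_n = O(|T_n^\Delta|)$, which yields amortized $O(1)$ per call. The critical structural property is that a \emph{Next} call whose cost is proportional to $k$ leaves the leftmost $n-k$ characters of $C$ untouched and rewrites the rightmost $k$ characters as the A-order-smallest valid suffix of length $k$ consistent with the preserved prefix---this is exactly the shape produced by \emph{updateChildren} and \emph{updateBrothers}, matching the form of the A-order-first tree in Figure~\ref{chemfirstlast}-a. It follows that the number of \emph{Next} calls of cost at least $k$ injects into the set of distinct length-$(n-k)$ prefixes of codewords of $T_n^\Delta$, which is at most $O(|T_{n-k}^\Delta|)$. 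Grouping by modification depth and summing by parts gives
\[
W_n \;=\; O\!\left(\sum_{k=1}^{n} |T_{n-k}^\Delta|\right).
\]
Since $|T_m^\Delta|$ grows at least geometrically in $m$ for $\Delta \geq 2$ (which I would verify by exhibiting at least two distinct leaf-extensions of every size-$(m-1)$ tree to size $m$), this sum is a geometric series dominated by its top term and equals $O(|T_n^\Delta|)$, as desired.

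\textbf{Main obstacle.} The subtle step is the structural claim that the reconstruction performed by \emph{updateChildren} and \emph{updateBrothers} really does output the A-order-minimum suffix compatible with the preserved prefix. This requires a careful invariant-style induction through the pseudocode, in particular tracking the role of the auxiliary array $NChild$ and verifying that the $\Delta$-bound on sibling groups is respected throughout the rewrite. Once this invariant is in place, the counting of calls by modification depth, and hence the geometric-series collapse above, follow cleanly; without it, the amortized bound cannot be justified.
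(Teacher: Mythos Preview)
Your worst-case argument is fine and matches the paper's: the pointer \emph{Current} is monotone and the rewriting helpers touch each cell at most once.

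For the average case, however, your plan diverges from the paper and contains a real gap. The paper does \emph{not} count prefixes. It classifies each tree by the distance $i$ from the right end to the last occurrence of $s$, $m$ or $\ell$ in its codeword, calls that count $S^{n,\Delta}_i$, and then proves the single inequality
\[
\sum_{i\ge j} S^{n,\Delta}_i \;\le\; S^{n+1,\Delta}_j
\]
by an add-a-leaf injection into $T_{n+1}^\Delta$. Summing over $j$ collapses the double sum to $S^{n+1,\Delta}=|T_{n+1}^\Delta|$, and the bounded ratio $|T_{n+1}^\Delta|/|T_n^\Delta|$ finishes the proof. No prefix counting, no geometric series over $|T_m^\Delta|$.

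Your route instead rests on two assertions, both of which are problematic. First, the claimed injection ``calls of cost at least $k$ $\hookrightarrow$ length-$(n-k)$ prefixes'' is \emph{false}. Take $n=5$, $k=3$, $\Delta\ge 3$: the trees with codewords $s\ell mrs$, $s\ell msr$, and $s\ell rss$ all share the length-$2$ prefix $s\ell$, yet tracing \emph{Next} on each shows that all three incur cost $\ge 3$ (the third actually has cost $4$). So several cost-$\ge k$ calls collide on the same $(n-k)$-prefix, and your counting-by-modification-depth step does not go through as written. Second, even if you repaired the injection (say, to cost exactly $k$), you still assert without proof that the number of length-$m$ prefixes of valid $T_n^\Delta$ codewords is $O(|T_m^\Delta|)$. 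This is not obvious: a length-$m$ prefix records, beyond the induced $m$-node subtree, one extra bit per node on the rightmost spine (whether more right siblings exist in the full tree), and that spine can have length up to $m$. At $m=3$ there are already at least five valid prefixes but only two trees in $T_3^\Delta$, and the ratio need not stay bounded.

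If you want to keep the prefix-style amortization, you would need either a direct bound on the number of cost-$\ge k$ inputs that does not pass through a prefix count, or a genuine proof that the prefix count grows no faster (up to constants) than $|T_m^\Delta|$. The paper's add-a-leaf inequality sidesteps both issues entirely.
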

\begin{proof}{
In the algorithm given in Figure~\ref{genchmical}, we scan the sequence from right to left (corresponding to the reverse pre-order traversal of the corresponding tree as discussed before). Inside or after every loop of the algorithm, the variable $Current$ (which keeps track of the node we currently process), decreases equivalently to the number of iterations, and it can not be decreased more than $n$ times; therefore,  the worst case time complexity of the algorithm is $O(n)$.
   For computing the average time, it should be noted that during the scanning process, every time we visit the characters $s$, $m$ or $\ell$, the algorithm terminates, so we define $S^{n, \Delta}_i$ as the number of codewords of $T_n^\Delta$ trees whose last character $s$, $m$ or $\ell$ has distance $i$ from the end, and $S^{n, \Delta}$ as the total number of $T_n^\Delta$ trees.
 Obviously we have:
\begin{equation}
S^{n,\Delta} = \sum_{i=1}^{n} S^{n,\Delta}_i.
\end{equation}
We define $H_n$ as the average  time of generating all codewords of $T_n^\Delta$ trees, 
\begin{quote}
\begin{quote}
$H_n \leq (k/S^{n,\Delta}) \sum_{i=1}^{n} iS^{n,\Delta}_{i}$,\\
${\ \ \ \ } \leq (k/S^{n,\Delta}) \sum_{j=1}^{n} \sum_{i=j}^{n}S^{n,\Delta}_{i}$.
\end{quote}
\end{quote}
Where $k$ is a constant value. On the other hand, consider that for $S^{n+1,\Delta}_j$  we have two cases, in the first case, the last character $s$, $m$ or $\ell$ is a leaf and in the second one, it is not.
Therefore, $S^{n+1,\Delta}_j$ is greater than or equal to just the first case, and in that case by removing the node corresponding to the `last character $s$, $m$ or $\ell$ of the codeword', the remaining tree will have a corresponding codeword belongs to exactly one of $S^{n, \Delta}_k$ cases, for $j\leq k \leq n$. By substituting $k$ and $i$ we have:
\begin{quote}
\begin{quote}
 $S^{n+1, \Delta}_j \geq \sum_{i=j}^{n} S^{n, \Delta}_i$.
\end{quote}
\end{quote}
Therefore, for $H_n$ we have:
\begin{quote}
\begin{quote}
$H_n \leq (k/S^{n,\Delta}) \sum_{j=1}^{n} S^{n+1, \Delta}_{j}$,
\end{quote}
\end{quote}
then by using Equation (1),
\begin{quote}
\begin{quote}
$H_n \leq k S^{n+1,\Delta}/S^{n,\Delta} $.
\end{quote}
\end{quote}
Finally from~\cite{za80} we know that the total number of ordered trees is growing same as Catalan number, while $T_n^\Delta$ is a subset of ordered trees can not grow faster than that, this guarantees that for large enough values of $n$, $S^{n+1,\Delta}/S^{n,\Delta}=O(1)$. Therefore,
$H_n \leq kO(1)=O(1)$.
 }\end{proof}
It should be mentioned that this constant average time complexity is  without considering the input or the output time. 

\section{Ranking and Unranking algorithms}
\label{Ranksection}
\noindent By designing a generation algorithm in a specific order, the ranking of algorithm is desired.  In this section, ranking and unranking algorithms for these trees in A-order will be given. 
  Ranking and unranking algorithms usually use a precomputed table of the number of a subclass of given trees with some specified properties to achieve efficient time complexities; these precomputations will be done only once and stored in a table for further use. 
Let  $S^{n, \Delta}$ be the number of $T_n^\Delta$ trees, 
 $S^{n,\Delta}_{m,d}$  be the number of $T_n^\Delta$ trees whose first subtree has {\bf exactly} $m$ nodes and its root has maximum degree of $d$, and  $D^{n,\Delta}_{m,d}$  be the number of $T_n^\Delta$ trees whose first subtree has {\bf at most} $m$ nodes and its root has maximum degree of $d$. 
 \begin{theorem}
 \label{DfromS}\ 
 \begin{itemize}
    \item  $D^{n,\Delta}_{m,d} =\sum_{i=1}^m S^{n,\Delta}_{i,d}$,
    \item $S^{n,\Delta}=\sum_{i=1}^{n-1} S^{n,\Delta}_{i,\Delta}$.
\end{itemize}
\end{theorem}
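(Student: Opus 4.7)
The plan is to prove both identities by elementary partition (double-counting) arguments, with no induction or recursion needed since the claim is purely a rewriting of the definitions.

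For the first identity, I would unpack the two definitions. By definition, $D^{n,\Delta}_{m,d}$ counts the $T_n^\Delta$ trees whose first subtree has \emph{at most} $m$ nodes and whose root has degree at most $d$, while $S^{n,\Delta}_{i,d}$ counts those for which this first subtree has \emph{exactly} $i$ nodes (with the same root-degree bound). The event ``first subtree has at most $m$ nodes'' partitions into the disjoint events ``first subtree has exactly $i$ nodes'' for $i=1,2,\ldots,m$ (the first subtree is non-empty as long as the tree has more than one node, so $i\ge 1$), and this partition is preserved by fixing the same root-degree bound $d$ on both sides. Summing the cardinalities of these disjoint classes yields
\begin{equation*}
D^{n,\Delta}_{m,d}=\sum_{i=1}^{m} S^{n,\Delta}_{i,d},
\end{equation*}
which is the first claim.

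For the second identity, I would observe that for any $T\in T_n^\Delta$ with $n\ge 2$, the root has a well-defined first subtree whose size ranges over $\{1,2,\ldots,n-1\}$, and by definition of $T_n^\Delta$ the root degree is automatically bounded by $\Delta$. Hence the family $T_n^\Delta$ decomposes, disjointly and exhaustively, according to the size $i$ of the first subtree, each class having exactly $S^{n,\Delta}_{i,\Delta}$ elements. Summing over $i$ gives
\begin{equation*}
S^{n,\Delta}=\sum_{i=1}^{n-1} S^{n,\Delta}_{i,\Delta}.
\end{equation*}

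There is no real obstacle here beyond keeping the definitions straight: one must check that ``at most $d$'' in the definition of $S^{n,\Delta}_{i,d}$ is indeed used consistently in $D^{n,\Delta}_{m,d}$, so that a tree counted by $S^{n,\Delta}_{i,d}$ for some $i\le m$ is in bijection with itself as a tree counted by $D^{n,\Delta}_{m,d}$. A minor caveat worth flagging is the degenerate case $n=1$ (the single-node tree has no first subtree); the second identity is to be read for $n\ge 2$, with $S^{1,\Delta}=1$ treated as a boundary case. Once these bookkeeping points are made explicit, both equalities follow immediately from the definitions, and the theorem is proved.
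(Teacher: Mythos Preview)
Your proposal is correct and matches the paper's own approach: the paper simply states ``The proof is trivial,'' and your partition-by-first-subtree-size argument is exactly the unpacking of the definitions that makes this triviality explicit. There is nothing to add.
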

\begin{proof}{
The proof is trivial. 
}\end{proof}
%

\begin{theorem}
\label{snmd}

$$S^{n,\Delta}_{m,d}=S^{m+1,\Delta}_{m,1}\times \sum_{i=1}^{n-m-1}(S^{n-m,\Delta}_{i,d-1}).$$
\end{theorem}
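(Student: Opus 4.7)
The plan is to prove the identity by counting the trees of $S^{n,\Delta}_{m,d}$ via the natural decomposition of a $T^\Delta$ tree into its first subtree and the rest of the tree, and reading off each factor of the right-hand side combinatorially.

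First I would observe that the factor $S^{m+1,\Delta}_{m,1}$ is precisely the number of choices for the first subtree. Indeed, a $T_{m+1}^\Delta$ tree whose root has degree at most $1$ and whose (only) first subtree has exactly $m$ nodes is uniquely obtained from an arbitrary $T_m^\Delta$ tree by attaching a fresh root on top; this identification is a bijection, so $S^{m+1,\Delta}_{m,1} = S^{m,\Delta}$ is the number of candidates for the first subtree $T_1$ of a tree counted by $S^{n,\Delta}_{m,d}$.

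Next, given any $T \in T_n^\Delta$ with root $r$ and child subtrees $T_1, T_2, \ldots, T_k$ where $|T_1| = m$ and $k \leq d$, I would define the remainder $R = T \setminus T_1$, namely the rooted ordered tree whose root is $r$ and whose subtrees, in order, are $T_2, \ldots, T_k$. By construction $|R| = n-m$ and the root of $R$ has degree $k-1 \leq d-1$. Conversely, any pair $(T_1, R)$ consisting of a $T_m^\Delta$ tree together with an independently chosen $T_{n-m}^\Delta$ tree $R$ whose root has degree at most $d-1$ reassembles uniquely into a tree counted by $S^{n,\Delta}_{m,d}$, by reinserting $T_1$ as the leftmost child of $r$. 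This bijection immediately splits the count into a product.

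To finish, I would count the admissible remainders $R$. Since the first subtree of such an $R$ can range in size from $1$ to $n-m-1$, and for each size $i$ the count is $S^{n-m,\Delta}_{i,d-1}$ by definition, Theorem~\ref{DfromS} (first item) gives the total as $D^{n-m,\Delta}_{n-m-1,d-1} = \sum_{i=1}^{n-m-1} S^{n-m,\Delta}_{i,d-1}$. Multiplying by the number of choices for $T_1$ produces exactly the claimed identity. The only subtle point will be the degenerate regime $k=1$ (equivalently $m = n-1$), in which $R$ collapses to a single node and the sum on the right is empty; this is the main obstacle, and it is either avoided by restricting to $d \geq 2$ or handled as a separate base case, since it corresponds to the peel-off step in the ranking algorithm that removes a chain root.
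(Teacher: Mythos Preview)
Your proposal is correct and follows essentially the same decomposition as the paper: split a tree counted by $S^{n,\Delta}_{m,d}$ into its first subtree (accounting for the factor $S^{m+1,\Delta}_{m,1}$) and the remainder obtained by deleting that subtree from the root (accounting for $\sum_{i=1}^{n-m-1} S^{n-m,\Delta}_{i,d-1}$). Your write-up is more explicit than the paper's three-line argument and, in particular, you flag the degenerate boundary $m=n-1$ that the paper does not discuss.
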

\begin{proof}{
Let $T$ be a $T_n^\Delta$ tree whose first subtree has exactly $m$ nodes and its root has maximum degree of $d$; by the definition and as shown in the Figure~\ref{chemiccount}, the number of the possible cases for the first subtree is $S^{m+1,\Delta}_{m,1}$  and the number of cases for the other parts of the tree is
$\sum_{i=1}^{n-m-1}(S^{n-m,\Delta}_{i,d-1})$ (by removing the first subtree from $T$).
Therefore, we have:
$$S^{n,\Delta}_{m,d}=S^{m+1,\Delta}_{m,1}\times \sum_{i=1}^{n-m-1}(S^{n-m,\Delta}_{i,d-1}).$$
}\end{proof}

\begin{figure}[!t]
\begin{center}
\includegraphics[scale=0.95]{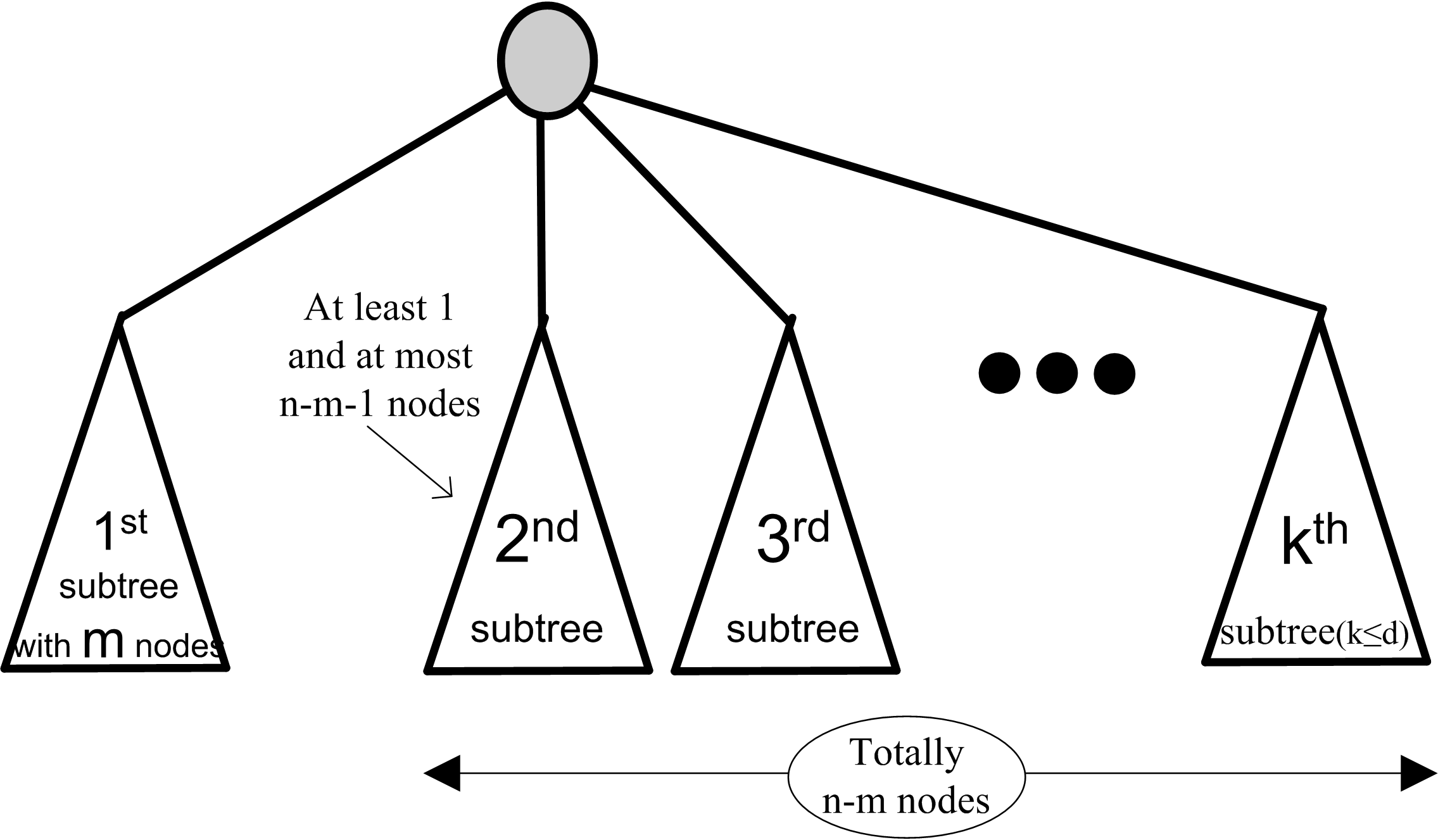}
\end{center}
\caption{$T_n^\Delta$ tree whose first subtree has exactly $m$ nodes and its root has maximum degree of $d$.}
\label{chemiccount}
\end{figure}

 Now, let $T$ be a $T_n^\Delta$ tree whose subtrees are defined by $T_1, T_2, \ldots , T_k$ and for $1\leq i \leq k\leq \Delta$ : $|T_i|=n_i$ and $\sum_{i=1}^{k}n_i=n-1$.  
One way to compute the rank of tree $T$ is to enumerate the number of trees generated before $T$. 
  Let {\em Rank(T,n)} be the rank of $T$. The number of $T^\Delta$ trees whose first subtree is smaller than $T_1$ is equal to:
$$\sum_{i=1}^{n_1-1}S^{n,\Delta}_{i,\Delta} + (Rank(T_1,n_1)-1)\times \sum_{i=1}^{n-n_1}S^{n-n_1,\Delta}_{i,\Delta-1},$$

\noindent and the number of $T^\Delta$ trees whose first subtree is equal to $T_1$ but the second subtree is smaller than $T_2$ is equal to:
$$\sum_{i=1}^{n_2-1}S^{n-n_1,\Delta}_{i,\Delta-1} + (Rank(T_2,n_2)-1)\times \sum_{i=1}^{n-n_1-n_2}S^{n-n_1-n_2,\Delta}_{i,\Delta-2}.$$

\noindent Similarly, the number of $T^\Delta$ trees whose first ($j-1$) subtrees are equal to $T_1, T_2, \ldots , T_{j-1}$ and the $j^{th}$ subtree is smaller than $T_j$ is equal to:

$$\sum_{i=1}^{n_j-1}S^{(n-\sum_{\ell=1}^{j-1}n_\ell),\Delta}_{i,\Delta-j+1} + (Rank(T_j,n_j)-1)\times \sum_{i=1}^{n-\sum_{\ell=1}^{j}n_\ell}S^{n-\sum_{\ell=1}^{j}n_\ell,\Delta}_{i,\Delta-j}.$$
 Therefore, regarding enumerations explained above, for given tree $T\in T_n^\Delta$  whose subtrees are defined by $T_1, T_2, \ldots , T_k$ , we can write:
\begin{eqnarray*}
Rank(T,1)&=&1,\\
 Rank(T,n)&=&1+\sum_{j=1}^k ( \sum_{i=1}^{n_j-1} S^{(n-\sum_{\ell=1}^{j-1}n_\ell),\Delta}_{i,\Delta-j+1}+ 
( Rank(T_j, n_j)-1) \sum_{i=1}^{n-\sum_{\ell=1}^{j}n_\ell} S^{(n-\sum_{\ell=1}^{j}n_\ell),\Delta}_{i,\Delta-j}).
\end{eqnarray*}
Hence, from Theorem~\ref{DfromS}, by using $D^{n,\Delta}_{m,d} =\sum_{i=1}^m S^{n,\Delta}_{i,d}$, we have:
\begin{eqnarray*}
Rank(T,1)&=&1,\\
 Rank(T,n)&=&1+\sum_{j=1}^k ( D^{(n-\sum_{\ell=1}^{j-1}n_\ell),\Delta}_{(n_j-1),(\Delta-j+1)}+ 
 ( Rank(T_j, n_j)-1) D^{(n-\sum_{\ell=1}^{j}n_\ell),\Delta}_{(n-\sum_{\ell=1}^{j}n_\ell),(\Delta-j)}).\ \ \ \ \ \  \ \ \ \ 
\end{eqnarray*}

To achieve the most efficient time for ranking and unranking algorithms, we need to precompute $D^{n,\Delta}_{m,d}$ and store it for further use. 
 Assuming $\Delta$ is constant, to store $D^{n,\Delta}_{m,d}$ values, a 3-dimensional table denoted by $D[n,m,d]$ is enough, 
 this table will have a size of $O(n\times n \times \Delta)=O(n^2)$ and can be computed using Theorems~\ref{DfromS} and~\ref{snmd} with time complexity of $O(n\times n \times \Delta)=O(n^2)$. 

To compute the rank of a codeword stored in array $C$, we also need an auxiliary array $N[i]$ which keeps the number of nodes in the subtree whose root is labeled by $C[i]$ and corresponds to $n_i$ in the above formula. This array can be computed  by a pre-order traversal or a level first search (DFS) algorithm  just once before  we call the ranking algorithm.


The  pseudo code  for ranking algorithm is given in Figure~\ref{chemicrank}. In this algorithm, $Beg$ is the variable that shows the positions of the first character in the array $C$ whose rank is being computed ($Beg$ is initially set to 1), and $Fin$ is the variable that returns the position of the last character of $C$.

\begin{figure}[!t]
\begin{algorithmic}[0] 
\STATE \hspace*{0.3cm}{\bf Function} {\em Rank}( $Beg:$ {\bf integer}; {\bf var} $Fin$: {\bf integer}) ;
\STATE \hspace*{0.3cm}{\bf Var}   $R$, $Point$, $PointFin$, $j$, $Nodes$, $n$: {\bf integer};
\STATE \hspace*{0.3cm}{\bf begin}
\STATE \hspace*{1cm}		$n := N [Beg] ;$
\STATE \hspace*{1cm}		{\bf if} ($n = 1$) {\bf then begin} 
\STATE \hspace*{1.7cm}		{ $Fin:=Beg$;   {\bf return}(1) } {\bf end};	
\STATE \hspace*{1cm}		{\bf else begin}
\STATE \hspace*{1.7cm}		 $Point:= Beg+1$;   $R:= 0$;   $Nodes:= 0$; $j:=1$;
\STATE \hspace*{1.7cm}		{\bf while} ( $Nodes<n$ ) {\bf do begin} 
\STATE \hspace*{2.4cm}			$R:=R+D[n-Nodes,N [Point]-1,\Delta-j+1] + $
\STATE \hspace*{3.3cm}				$(Rank(Point,PointFin)-1)\times$
\STATE \hspace*{3.3cm}				$ D[(n-Nodes-N[Point]), (n-Nodes-N[Point]), \Delta-j];$
\STATE \hspace*{2.4cm}			$Nodes := Nodes + N [Point]$; j:=j+1;
\STATE \hspace*{2.4cm}			$Point := PointFin + 1$;
\STATE \hspace*{1.7cm}		{\bf end};
\STATE \hspace*{1.7cm}		$Fin:=Point-1$;
\STATE \hspace*{1.7cm}		{\bf return}( $R+1$);
\STATE \hspace*{1cm}   	{\bf end};
\STATE \hspace*{0.3cm}{\bf end}
\end{algorithmic}
 \begin{caption}{\label{chemicrank}
  Ranking algorithm for $T_n^\Delta$ trees.}
\end{caption}
\end{figure}

Now the time complexity of this algorithm is discussed. Obviously computing the array $N[i]$ takes $O(n)$. Hence we discuss the complexity of  ranking algorithm which was given in Figure~\ref{chemicrank}.

\begin{theorem}
The ranking algorithm has the time complexity of $O(n)$.
\label{ranktime}
\end{theorem}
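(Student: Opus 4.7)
The plan is to establish the bound by analyzing the recursion tree of \emph{Rank} and showing that the non-recursive work per call is bounded by a constant (since $\Delta$ is treated as constant), while the number of calls is exactly $n$.

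First I would observe that each invocation \emph{Rank}$(Beg, Fin)$ corresponds bijectively to a subtree of the encoded tree, namely the one rooted at the node whose codeword character sits at position $Beg$. Indeed, the procedure reads $n := N[Beg]$, then enters its \textbf{while} loop which iterates once per child of that root, each iteration advancing $Point$ past an entire child subtree (via the recursive call that returns its $PointFin$). Consequently the recursion tree of \emph{Rank} is isomorphic to the input tree $T$, so the total number of invocations of \emph{Rank} during one top-level call on a $T_n^\Delta$ tree is exactly $n$.

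Next I would bound the work done inside a single invocation, excluding its recursive calls. The body performs a constant number of arithmetic operations, a single table lookup in $N$, and at most $\Delta$ iterations of the \textbf{while} loop. Each iteration does $O(1)$ work apart from the recursive call: it performs two lookups in the precomputed table $D[\cdot,\cdot,\cdot]$ (each $O(1)$ because $D$ is stored as a plain 3-dimensional array filled during precomputation), a constant number of additions/multiplications, and the bookkeeping updates of $Nodes$, $j$, and $Point$. Since $\Delta$ is assumed constant, the non-recursive work of any single invocation is $O(\Delta) = O(1)$.

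Combining these two observations gives the bound: the total running time is at most a constant times the number of invocations, which is $n$, yielding $O(n)$. I would also remark that the auxiliary array $N$ is assumed already computed (one DFS in $O(n)$ time as mentioned immediately before the theorem), and that the table $D$ is filled once in the $O(n^2)$ precomputation phase and thereafter accessed in $O(1)$ per query, so it does not contribute to the per-ranking cost. The main (mild) obstacle is making the bijection between \emph{Rank} invocations and tree nodes rigorous; I would justify it by induction on subtree size, using the invariant that after the $j$-th iteration of the \textbf{while} loop, $Point$ equals the position immediately past the $j$-th child subtree and $Nodes$ equals the cumulative size of those $j$ subtrees, so the loop terminates precisely when $Nodes = n-1$, i.e.\ after exactly $\deg(\text{root}) \le \Delta$ iterations.
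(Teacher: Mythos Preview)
Your argument is correct. Both your proof and the paper's exploit the same structural fact---that the recursion of \emph{Rank} mirrors the shape of $T$---but they account for the work differently. The paper writes the recurrence $T(n)=T(n_1)+\cdots+T(n_k)+\alpha k$ (charging $\alpha k$ for the $k$ loop iterations at the root) and closes it by strong induction with the hypothesis $T(m)\le\beta(m-1)$; note that this argument never uses $k\le\Delta$, since the $-1$'s coming from the children exactly absorb the $+\alpha k$ overhead via $\sum_i n_i=n-1$. You instead count invocations (exactly $n$, one per node) and bound the non-recursive work of each by $O(\Delta)=O(1)$, invoking the standing assumption that $\Delta$ is constant. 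Your route is a touch more direct here; the paper's recurrence, on the other hand, would still give $O(n)$ even for unbounded $\Delta$, which your per-call $O(1)$ bound would not. Under the paper's assumptions both are valid and essentially equivalent.
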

\begin{proof}{
Let $T$ be a $T_n^\Delta$ tree whose
subtrees are defined
 by $T_1$, $T_2$, $\ldots$, $T_k$ and for
 $1 \leq i \leq k\leq\Delta:\ |T_i| = n_i$ and $\sum_{i=1}^{k} n_i = n-1$, and let $T(n)$ be the time
 complexity of the ranking algorithm, then we can write:
 $$ T(n) = T(n_1) + T(n_2) + \ldots + T(n_k) + \alpha  k,$$
where $\alpha$ is a constant and $\alpha k$ is the time complexity of the non-recursive parts of the algorithm. 
 By using a simple induction, we prove that if $\beta$ is a value greater than $\alpha$  then $T(n) \leq \beta  n$.
 We have $T(1) \leq \beta$. We assume $T(m) \leq \beta  (m-1)$ for each $m < n$; therefore,
\begin{quote}
\begin{quote}
$T(n) \leq \beta(n_1-1) + \beta(n_2-1) + \ldots + \beta(n_k-1) + \alpha k$,\\
$T(n) \leq \beta(n_1+ \ldots +n_k - k) + \alpha k$,\\
$T(n) \leq \beta n - \beta  k + \alpha  k \leq \beta n$,\\
\end{quote}
\end{quote}
So the induction is complete and $T(n) \leq \beta n = O(n)$.
}\end{proof}

If $a$ and $b$ are integer numbers, let $(a \ div\  b)$ and $(a\  mod\  b)$ denote {\em integer division} and {\em remainder} of the division of $a$ and $b$, respectively ($a=(a \ div\  b)\times b + (a \ mod\ b)$). Before giving the description of the unranking algorithm, we have to define two new operators, namely $(a\ div^+\ b)$ and $(a\ mod^+ \ b)$ as follows. 
 \begin{itemize}
 \item  If $b \mid a$,  then $(a \ div^+ \ b) = (a \ div \ b)-1$, and $(a\ mod^+\ b) = b$.
 \item  If $b \nmid a$, then $(a \ div^+ \ b) = (a \ div \ b)$, and $(a\ mod^+\ b) = (a\ mod\ b)$.
\end{itemize}
For unranking algorithm, we also need the values of $S^{n, \Delta}$, these values can be stored in an array of size $n$, denoted by $S[n]$ (we assume $\Delta$ is constant).
The unranking algorithm is the reverse manner of ranking algorithm, this algorithm is given in Figure~\ref{ChemicUnrank}. In this algorithm, the rank $R$ is the main input, $Beg$ is the variable to show the position of the first character in the global array $C$ and initially is set to 1. The generated codeword will be stored in global array $C$. The variable $n$ is the number of nodes and $Root$ stores the character corresponding to the node we consider for the unranking procedure.
For the next character, we have two possibilities. If the root is $r$ or $s$, then the next character, if exists, will be $\ell$ or $s$ (based on the number of root's children). If the root is $m$ or $\ell$,  we have again two possible cases: if all the nodes of the current tree are not produced, then the next character is $m$, otherwise, the next character will be $r$. 

\begin{figure}[!t]
\begin{algorithmic}[0] 
\STATE \hspace*{0.3cm}{\bf Function} {\em UnRank} ( $R, Beg, n$: {\bf integer};   $Root$: {\bf char});
\STATE \hspace*{0.3cm}{\bf var}   $Point$, $i$, $t$, $ChildNum$:  {\bf integer};
\STATE \hspace*{0.3cm}{\bf begin}
\STATE \hspace*{1cm}		{\bf if} ( ($n = 0$) {\bf or} ($R = 0$) ) {\bf then}  {\bf return}($Beg-1$)
\STATE \hspace*{1cm}		{\bf else begin}
\STATE \hspace*{1.7cm}		{\bf if} ($n = 1$) {\bf then begin}
\STATE \hspace*{2.4cm}			$C[Beg] := Root$;   {\bf return}($Beg$);  
\STATE \hspace*{1.7cm}		{\bf end};
\STATE \hspace*{1.7cm}		{\bf	else begin}
\STATE \hspace*{2.4cm}			$C [Beg] := Root$;  $Point := Beg + 1$;
\STATE \hspace*{2.4cm}			$Root :=$  $'\ell'$;   $ChildNum := 0$;
\STATE \hspace*{2.4cm}			{\bf while} ($n > 0$) {\bf do begin} 
\STATE \hspace*{3.1cm}				$ChildNum++$;
\STATE \hspace*{3.1cm}				{\bf find the smallest} $i$ {\bf that} $D[n,  i, \Delta-ChildNum+1] \geq R$;
\STATE \hspace*{3.1cm}				$R := R - D[n, i-1, \Delta-ChildNum+1]$;
\STATE \hspace*{3.1cm}				{\bf if} ($n - i$)$=1$ {\bf then} 
\STATE \hspace*{3.8cm}					{\bf if}  ($ChildNum=1$)   {\bf then}  {   $Root :=$ $'s'$;  }
\STATE \hspace*{3.8cm}					{\bf else}    {    $Root :=$ $' r'$;  }
\STATE \hspace*{3.1cm}				$t := S[n]$;
\STATE \hspace*{3.1cm}				$Point$ := {\em UnRank}( $(div^{+}(R, t))+1,  Point, i, Root$ ) $+ 1$;
\STATE \hspace*{3.1cm}				$R := mod^{+}(R, t)$;
\STATE \hspace*{3.1cm}				$n := n - i$;   $Root :=$ $'m'$;
\STATE \hspace*{2.4cm}			{\bf end};
\STATE \hspace*{2.4cm}			{\bf return}($Point-1$);
\STATE \hspace*{1.7cm}		{\bf end};
\STATE \hspace*{1cm}		{\bf end};
\STATE \hspace*{0.3cm}{\bf end}
\end{algorithmic}
 \begin{caption}{\label{ChemicUnrank}
  Unranking algorithm for $T_n^\Delta$ trees. }
\end{caption}
\end{figure}

\begin{theorem}
The time complexity of the unranking algorithm is $O(n \log n)$.
\label{unranktime}
\end{theorem}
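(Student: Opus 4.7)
The plan is to mirror the argument used for Theorem~\ref{ranktime} (ranking in $O(n)$) but charge an extra $\log n$ factor per recursive invocation that comes from the ``find the smallest $i$'' step. The recursion tree has the same shape as in the ranking analysis: each call at a node with subtree sizes $n_1,\dots,n_k$ (with $\sum n_i = n-1$ and $k\le\Delta$) spawns one recursive call per child, so the only thing to reanalyze is the non-recursive work done at each level.

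First I would argue that the search line \textbf{find the smallest $i$ such that $D[n,i,\Delta-\text{ChildNum}+1]\ge R$} can be executed in $O(\log n)$ time. By Theorem~\ref{DfromS}, $D^{n,\Delta}_{m,d}=\sum_{i=1}^{m}S^{n,\Delta}_{i,d}$, so for fixed $n$ and $d$ the values $D[n,\cdot,d]$ form a non-decreasing sequence of length at most $n-1$. The smallest index $i$ on which the table first reaches $R$ can therefore be located by a single binary search in time $O(\log n)$, since the precomputed table is stored in a random-access array. All other operations inside one iteration of the \textbf{while} loop (the updates of $R$, $n$, $Root$, $Point$, reading $S[n]$, the $div^+/\!\!\mod^+$ calls) are $O(1)$, so each of the at most $\Delta$ iterations at one level costs $O(\log n)$, giving a per-call non-recursive cost of $O(\Delta \log n)=O(\log n)$ (recall $\Delta$ is constant).

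This yields the recurrence
\begin{equation*}
T(n)\;\le\;T(n_1)+T(n_2)+\cdots+T(n_k)+\alpha k\log n,
\end{equation*}
with $\alpha$ a constant, $k\le\Delta$, $n_1+\cdots+n_k=n-1$, and $T(1)=O(1)$. I would then prove $T(n)\le \beta n\log n$ by induction for a constant $\beta\ge \alpha\Delta$ (chosen large enough to also dominate the base case). Using the inductive hypothesis,
\begin{equation*}
T(n)\;\le\;\beta\sum_{i=1}^{k}n_i\log n_i + \alpha k\log n\;\le\;\beta(n-1)\log n + \alpha\Delta\log n\;\le\;\beta n\log n,
\end{equation*}
since $\alpha\Delta\le\beta$. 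Hence $T(n)=O(n\log n)$.

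The main obstacle is the binary-search claim: I need the monotonicity of $D[n,\cdot,d]$ to justify replacing a linear scan by $O(\log n)$ lookup, which is immediate from the cumulative-sum identity in Theorem~\ref{DfromS}. Once that step is established, the rest is a routine adaptation of the inductive bound used in the proof of Theorem~\ref{ranktime}, the only change being that the constant non-recursive cost $\alpha k$ per call is inflated by a $\log n$ factor, which the induction absorbs thanks to $k\le\Delta$ being constant.
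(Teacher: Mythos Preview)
Your argument is correct, and it is considerably simpler than the paper's own proof. Both proofs set up the same recursion over the subtree decomposition $n_1+\cdots+n_k=n-1$ and both rely on the monotonicity of $D[n,\cdot,d]$ (Theorem~\ref{DfromS}) to replace the search by a binary search. The difference is in how the non-recursive cost is accounted for. You charge each of the at most $\Delta$ binary searches at a call of size $n$ by $O(\log n)$, use that $\Delta$ is constant to get a per-call overhead of $\alpha k\log n\le\alpha\Delta\log n$, and then close the induction in one line via $\sum_i n_i\log n_i\le(n-1)\log n$. The paper instead charges the $j$-th search by $O(\log n_j)$, rewrites $\sum_j\log n_j=\log\prod_j n_j$, and maximises $\prod_j n_j$ subject to $\sum_j n_j=n-1$ via a Lagrange-multiplier argument (after first treating $k=2$ separately) to obtain $\log\prod_j n_j=O(n)$ before completing the induction. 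Your route is shorter precisely because it exploits the standing assumption that $\Delta$ is constant; the paper's product-bound approach does not use that assumption for the overhead term, at the cost of a more involved calculation. One cosmetic point: your base case $T(1)\le\beta\cdot1\cdot\log 1=0$ is vacuous, so you should either anchor the induction at $n\ge 2$ or use a bound such as $\beta n\log(n+1)$; this does not affect the conclusion.
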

\begin{proof}{
Let $T$ be a $T_n^\Delta$ tree whose
 subtrees are defined
 by $T_1$, $T_2$, $\ldots$, $T_k$ and for
 $1 \leq i \leq k\leq \Delta:\ |T_i| = n_i$ and $\sum_{i=1}^{k} n_i = n-1$, and let $T(n)$ be the time
 complexity of the unranking algorithm.
 With regards to the  unranking algorithm, the time complexity of finding $j$ such
 that $D[n, j, \Delta-ChildNum+1] \geq R$ for each $T_i$ of $T$ is $O(\log n_i)$;
 therefore, 
 \begin{quote}
\begin{quote}
$T(n) = O(\log n_1 + \log n_2 + \ldots + \log n_k) + T(n_1) + T(n_2)
+ \ldots + T(n_k).$
\end{quote}
\end{quote}

We want to prove that $T(n) = O(n\log n )$. In order to obtain an
upper bound for $T(n)$ we do as follows. First we prove this
assumption for $k=2$ then we generalize it. For $k=2$ we have $ T(n)
= O(\log(n_1)+\log(n_2)) + T(n_1) + T(n_2) $. Let $n_1 = x$ then we
can write the above formula as
\begin{quote}
\begin{quote}
$ T(n) = T(x) +
T(n-x) + O(\log(x)+\log(n-x)) =T(x) + T(n-x) + C'\log(n) .$
\end{quote}
\end{quote}
For proving that $T(n) = O(n\log(n))$, we use an induction on $n$. We
assume $T(m) \leq Cm\log(m)$ for all $m \leq n$, thus in $T(n)$ we
can substitute
\begin{quote}
\begin{quote}
$T(n) \leq C \times x\log(x) + C \times (n-x)\log(n-x) + C'\log(n). $
\end{quote}
\end{quote}
Let $f(x) = C \times x\log(x) + C \times (n-x)\log(n-x)$, now the
maximum value of $f(x)$ with respect to $x$ and by considering $n$ as
a constant value can be obtained by evaluating the derivation of
$f(x)$ which is $ f'(x) = C \times \log(x) - C \times \log(n-x) $.
Thus if $ f'(x) = 0$ we get $x = (n-1)/2 $ and by computing $f(1)$,
$f(n-2)$ and $f((n-1)/2)$ we have:
\begin{quote}
\begin{quote}
$f(1) = f(n-2) = C \times (n-2)\log(n-2), $\\
$f((n-1)/2)= 2C \times ((n-1)/2) \times \log((n-1)/2) \ < C \times
(n-2)\log(n-2),$
\end{quote}
\end{quote}
so the maximum value of $f(x)$ is equal to $ C \times (n-2)\log(n-2)$; therefore,
\begin{quote}
\begin{quote}
$  T(n) \leq C \times (n-2)\log(n-2) + C' \times \log(n). $
\end{quote}
\end{quote}
It is enough to assume $C = C'$, then $T(n) \leq C \times (n-2)\log(n) + C \times \log(n) \ \leq C \times n\log(n).$ 

 Now, for generalizing the above proof and proving $T(n) = O(n \log n)$, we should find the maximum of the
function $f(n_1, n_2, \ldots, n_k) = \prod_{i=1}^{k} n_i$. By the
Lagrange method we prove that the maximum value of $f(n_1, n_2, \ldots,
n_k)$ is equal to
$(\frac{n}{k})^k$. 
Then $\frac{\delta f}{\delta k} = (\frac{n}{k})^k
(\ln(\frac{n}{k})-1) = 0$, and
\begin{quote}
\begin{quote}
$ \ln(\frac{n}{k}) - 1 = 0$,\\
$ \frac{n}{k} = e \Rightarrow k = \frac{n}{e}$,
\end{quote}
\end{quote}
\noindent
 so the maximum value of $f(n_1, n_2, \ldots, n_k)$ is equal
to $e^{\frac{n}{e}}$. 
We know that:
\begin{quote}
\begin{quote}
$T(n) =O(\log n_1 + \log n_2+ \ldots +\log n_k) + T(n_1) + T(n_2) +\ldots + T(n_k),$


$T(n) = O(\log(\prod_{i=1}^{k}n_i) + \sum_{i=1}^{k}T(n_i)$,\\
$T(n) < O(\log (n^{\frac{n}{e}} ) ) + \sum_{i=1}^{k}T(n_i)$,\\
$T(n) < O(\frac{n}{e} \log e ) = O(n) + \sum_{i=1}^{k}T(n_i)$.
\end{quote}
\end{quote}
\noindent Finally, by using induction, we assume that for any $m<n$ we have $T(m) < \beta m \log m$, therefore:
\begin{quote}
\begin{quote}
$ T(n) = O(n) + \sum_{i=1}^{k}T(n_i) $,\\
$ T(n) < O(n) + \sum_{i=1}^{k} \beta O(n_i\log n_i)$,\\
$ T(n) < O(n) + \beta \log(\prod_{i=1}^{k}(n_i^{n_i}))$,\\
$ T(n) < O(n) + O(\log (n^n))$, \\
$T(n) = O(n\log n)$.
\end{quote}
\end{quote}
\noindent
 Hence, the proof is complete. 
 }\end{proof}

\section{Conclusion}

\noindent In this paper, we studied the problem of generation, ranking and unranking of ordered trees of size $n$ and maximum degree $\Delta$ which are a generalization of chemical trees; we presented an efficient algorithm for the generation of these trees in A-order with an encoding over 4 letters and size $n$. Also two efficient ranking and unranking algorithms were designed for this encoding. The generation algorithm has $O(n)$ time complexity in worst case and $O(1)$ in average case. The ranking and unranking algorithms have $O(n)$ and $O(n \log n)$ time complexity, respectively. The presented ranking and unranking algorithms use a precomputed table of size $O(n^2)$ (assuming $\Delta$ is constant). All presented algorithms at this paper are also implemented. For the future works, generating this class of trees in B-order and minimal change ordering, and finding some explicit relations for counting them, are major unresolved problems.

\bibliographystyle{eptcs}
\bibliography{z-Refrences.bib}

\end{document}